\newtheorem{lemma}{Lemma}
\title{Entanglement for Pattern Learning in Temporal Data with Logarithmic Complexity: Benchmarking on IBM Quantum Hardware}
\author[1,+]{Mostafizur Rahaman Laskar, \thanks{The authors contributed this work while affiliated at IBM Quantum Team, India. The authors acknowledge L. Venkat Subramaniam for the valuable suggestions and supporting this work.}}
\author[1,*,++]{Richa Goel}
\affil[1]{IBM Quantum Division, IBM Research Lab, India.}
\affil[+]{m.rahaman93@ieee.org}
\affil[++]{Richa.Goel4@ibm.com}
\begin{abstract}

Time series forecasting is foundational in scientific and technological domains, from climate modelling to molecular dynamics. Classical approaches have significantly advanced sequential prediction, including autoregressive models and deep learning architectures such as temporal convolutional networks (TCNs) and Transformers. Yet, they remain resource-intensive and often scale poorly in data-limited or hardware-constrained settings. We propose a quantum-native time series forecasting framework that harnesses entanglement-based parameterized quantum circuits to learn temporal dependencies. Our Quantum Time Series (QTS) model encodes normalized sequential data into single-qubit rotations and embeds temporal structure through structured entanglement patterns. This design considers predictive performance with logarithmic complexity in training data and parameter count. We benchmark QTS against classical models on synthetic and real-world datasets, including geopotential height fields used in numerical weather prediction. Experiments on the noisy backend and real IBM quantum hardware demonstrate that QTS can capture temporal patterns using fewer data points. Hardware benchmarking results establish quantum entanglement as a practical computational resource for temporal modelling, with potential near-term applications in nano-scale systems, quantum sensor networks, and other forecasting scenarios.

\end{abstract}
\begin{document}

\flushbottom
\maketitle
%
%
\thispagestyle{empty}

\noindent 

\section*{Introduction}

Forecasting temporal phenomena is a cornerstone of scientific inquiry and technological innovation, underlying applications that range from macroscopic climate prediction to microscopic molecular behaviour. In climate science, numerical weather prediction relies critically on the ability to extrapolate geophysical fields across time—a task that often requires high spatial-temporal resolution and computational efficiency~\cite{bauer2015quiet, schultz2021can}. At the molecular scale, anticipating the time evolution of quantum systems or vibrational states plays an equally foundational role in the design of materials, control of chemical reactions, and modelling of biological systems~\cite{gomez2021local, ollitrault2021molecular}. In both regimes, time series models that infer future states from historical data are indispensable, yet the diversity and complexity of real-world dynamics pose substantial challenges to modelling fidelity, generalizability, and scalability.

Classical time series models such as autoregressive (AR), Moving Average (MA), and autoregressive integrated moving average (ARIMA) have long provided interpretable frameworks for modelling temporal correlations. These models are statistically efficient for linear processes and computationally tractable, but they often struggle with non-stationarity, high-dimensional dependencies, or nonlinear structure. More recent advances in machine learning have introduced deep neural architectures—such as Long Short-Term Memory (LSTM) networks~\cite{graves2012long}, Temporal Convolutional Networks (TCNs)~\cite{lea2017temporal}, and Transformer-based models~\cite{vaswani2017attention}—that address these limitations by capturing complex dependencies across varying temporal horizons. While powerful, such models typically require large training datasets, extensive parameter tuning, and significant computational resources, rendering them less ideal for scenarios with sparse data or stringent latency constraints~\cite{lim2021time, oreshkin2019n}. In these contexts, alternative computational paradigms may offer new avenues for temporal reasoning.

Motivated by the increasing feasibility of quantum hardware and the theoretical promise of quantum speed-ups, we explore constructing a quantum-native time series model that leverages entanglement and amplitude encoding to compress and process temporal data. In particular, we introduce a parameterized quantum circuit design—termed the Quantum Time Series (QTS) model—that encodes normalized time series data into single-qubit rotation angles and mediates temporal memory via entangling gates. Inspired by autoregressive processes, the QTS framework evolves a fixed-width quantum state where qubits represent lagged variables, and the circuit structure captures dependencies through both local and nonlocal entanglement patterns. Unlike prior proposals that require persistent quantum memory or quantum recurrent structures~\cite{verdon2019learning}, our approach is fully compatible with near-term noisy quantum devices, relying only on shallow circuits and projective measurements.

Our study advances this quantum modelling framework along three increasingly expressive designs: (i) QTS with rotation-only encoding (R$(\theta)$); (ii) QTS with forward-neighbour entanglement; and (iii) QTS with both forward and cross-entanglement layers. These models share a common feature: they require only \(\mathcal{O}(\log N)\) qubits and parameters to process a length-\(N\) time series, in contrast to classical methods that typically scale linearly or quadratically with \(N\). We benchmark the QTS models against classical baselines—including AR, ARIMA, TCN, and Transformer models—across synthetic and reanalysis-derived geophysical datasets. Importantly, our analysis spans simulated noisy quantum backends and executions on real IBM Quantum hardware, incorporating error mitigation protocols where appropriate.

Our findings suggest that quantum circuits augmented with entanglement offer a viable substrate for learning temporal patterns at significantly reduced data and parameter complexity. These results are auspicious in light of emerging quantum advantages in other machine learning tasks~\cite{schuld2019quantum}, and they open a path toward scalable, interpretable, and resource-efficient forecasting models for quantum-enhanced environments.

Our primary contributions can be summarized as follows:

\begin{itemize}
    \item We propose a quantum time series (QTS) modelling framework that encodes normalized temporal data into single-qubit rotations and leverages entanglement (both forward and cross-qubit) to capture autoregressive dependencies through shallow quantum circuits.
    
    \item The QTS architecture scales logarithmically in the number of qubits and parameters with respect to the time series length, offering a resource-efficient alternative to classical methods that typically scale linearly or quadratically.
    
    \item We benchmark three QTS variants—rotation-only, forward-entangled, and fully entangled circuits (while keeping the parameter complexity intact) —against classical baselines, including AR, ARIMA, TCN, and Transformer models, using both synthetic data and real atmospheric measurements.
    
    \item We validate the real-world feasibility of our model by deploying it on IBM’s superconducting quantum processors: \texttt{ibm\_kingston} (156 qubits, CNOT error $\sim$1.07e-3, Heron R2), \texttt{ibm\_torino} (133 qubits, CNOT error $\sim$1.16e-3, Heron R1), and \texttt{ibm\_brisbane} (127 qubits, ECR error $\sim$6.89e-3, Eagle R3), demonstrating robust forecasting performance even under hardware noise.
\end{itemize}

\section*{Quantum time series model}

We introduce a quantum-native model for time series forecasting that structurally parallels classical autoregressive (AR) models while exploiting quantum entanglement to encode temporal dependencies. Our approach maps past time series values to qubit states through single-qubit rotations, entangles these states via parameter-free CNOT gates to emulate temporal dependencies, and performs inference through measurement-based decoding. This formulation is designed to operate within the constraints of near-term quantum devices while offering a scalable and physically interpretable alternative to classical models.

\subsection*{Notation and setup}

Let $\{x_t\}_{t \in \mathbb{Z}}$ denote a univariate, real-valued time series normalized to $x_t \in [0,1]$. The forecasting objective is to predict $x_{t+1}$ based on $N$ past values $\{x_{t-N}, \dots, x_{t-1}\}$. In our quantum formulation, we have encoded $n$ past data points into the states of $n$ qubits (where we kept $n<<N$, training data points), forming a composite quantum state $|\Psi_t\rangle$ at time $t$. Each past observation $x_{t-i}$ is transformed into a rotation angle $\theta_{t-i}$, defining the single-qubit state $|\psi_{t-i}\rangle = R_y(\theta_{t-i})|0\rangle$. The complete quantum input state is given by
\begin{equation}
    |\Psi_t\rangle = \bigotimes_{i=1}^{n} R_y(\theta_{t-i})|0\rangle,
\end{equation}
where $R_y(\theta) = \exp(-i \theta Y / 2)$ denotes the $y$-rotation on the Bloch sphere. 

\subsection*{Temporal entanglement structure}

To capture dependencies across time steps, we apply an entangling unitary $U_{\text{ent}}$ composed of two-qubit CNOT gates, defined in two structural motifs:
\begin{itemize}
    \item \textbf{Forward entanglement:} CNOT gates between consecutive qubits ($i \rightarrow i+1$) for $i = 1, \dots, n-1$, mimicking first-order memory.
    \item \textbf{Cross entanglement:} CNOT gates between qubits at two-step (or more steps in general) separation ($i \rightarrow i+2$ for even $i$) incorporating higher-order correlations.
\end{itemize}
The entangled quantum state becomes
\begin{equation}
    |\Psi_{\text{ent}}\rangle = U_{\text{ent}} |\Psi_t\rangle.
\end{equation}

\subsection*{Prediction via measurement}

Upon measurement in the computational basis, the quantum state $|\Psi_{\text{ent}}\rangle$ collapses to one of $2^n$ bitstrings labeled by $k \in \{0, \dots, 2^n - 1\}$. The probability of observing bitstring $k$ is given by $P(k) = |\alpha_k|^2$ where
\begin{equation}
    |\Psi_{\text{ent}}\rangle = \sum_{k=0}^{2^n - 1} \alpha_k |k\rangle.
\end{equation}
We define the predicted output $\hat{x}_{t+1}$ as the expectation value over measurement outcomes, scaled to match the original data domain:
\begin{equation}
    \hat{x}_{t+1} = \left( \sum_{k=0}^{2^n - 1} P(k) \cdot k \right) \cdot \frac{x_{\max} - x_{\min}}{2^n - 1} + x_{\min}.
\end{equation}

\subsection*{Entanglement for temporal pattern learning}

A central insight of our approach is using entanglement as a proxy for memory. Rather than storing past values explicitly, the temporal dependencies are embedded in the joint amplitudes of an entangled quantum state.

\begin{lemma}\label{lemma:entanglement_memory}
Let $|\Psi_t\rangle$ be a product of single-qubit rotations encoding a time series history, and $U_{\text{ent}}$ an entangling circuit with forward and cross-CNOT gates. Then $|\Psi_{\text{ent}}\rangle = U_{\text{ent}} |\Psi_t\rangle$ encodes multi-lag temporal correlations among $\{x_{t-i}\}$ via entangled amplitudes.
\end{lemma}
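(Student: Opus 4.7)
The plan is to formalize the informal phrase ``encodes multi-lag temporal correlations'' via two concrete witnesses and then derive both from a single linear-algebraic observation about how $U_{\text{ent}}$ acts on computational-basis labels. The witnesses I would adopt are (i) the joint amplitudes $\langle k|\Psi_{\text{ent}}\rangle$ fail to factorize as a product of single-qubit amplitudes, and (ii) two-point observables such as $\langle Z_i Z_j\rangle$ fail to factor as $\langle Z_i\rangle\langle Z_j\rangle$, with the non-factoring pieces explicitly involving multiple lagged angles $\theta_{t-i}$.

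First I would write out the pre-entanglement amplitudes in factorized form,
\begin{equation}
\langle k_1\cdots k_n|\Psi_t\rangle=\prod_{i=1}^{n} c_i^{(k_i)},\qquad c_i^{(0)}=\cos(\theta_{t-i}/2),\ c_i^{(1)}=\sin(\theta_{t-i}/2).
\end{equation}
Each CNOT is a Clifford permutation of the computational basis, acting on bit-string labels by an invertible $\mathbb{F}_2$-linear map; composing the forward-neighbour and cross-neighbour layers yields one overall $M\in\mathrm{GL}_n(\mathbb{F}_2)$, so the post-entanglement amplitude becomes
\begin{equation}
\langle k'|\Psi_{\text{ent}}\rangle=\prod_{i=1}^{n} c_i^{(\ell_i(k'))},\qquad \ell_i(k')=(M^{-1}k')_i.
\end{equation}
The forward-plus-cross pattern is constructed so that several rows of $M^{-1}$ have support of size greater than one; hence the amplitude of each output bitstring is an inseparable product of factors indexed by distinct lags, establishing witness (i) and in particular showing that $|\Psi_{\text{ent}}\rangle$ is not a tensor product of single-qubit states.

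For witness (ii) I would pass to the Heisenberg picture. Each CNOT conjugates $Z$ on the target into $Z\otimes Z$ on control and target, so $U_{\text{ent}}^\dagger Z_i U_{\text{ent}}=\prod_k Z_k^{T_{ik}}$ for a binary matrix $T$ read off the Clifford tableau of $U_{\text{ent}}$. Combined with $\langle Z_k\rangle_{\Psi_t}=\cos\theta_{t-k}$ and the product structure of $|\Psi_t\rangle$, this yields $\langle Z_i\rangle_{\Psi_{\text{ent}}}=\prod_{k:T_{ik}=1}\cos\theta_{t-k}$, and an analogous multi-angle product for $\langle Z_iZ_j\rangle_{\Psi_{\text{ent}}}$ which generically differs from $\langle Z_i\rangle\langle Z_j\rangle$. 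The main obstacle I anticipate is not any one calculation but the imprecision of the lemma's conclusion itself: most of the write-up would be spent pinning down the two witnesses above as the operational meaning of ``multi-lag correlations,'' and then checking that the rows of $M^{-1}$ and $T$ produced by the specific forward-plus-cross layout really do have support larger than one — a finite combinatorial check on the $\mathcal{O}(\log N)$-depth Clifford tableau, with the only subtlety being to rule out accidental $\mathbb{F}_2$ cancellations between overlapping CNOT pairs.
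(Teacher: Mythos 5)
Your proposal is correct and is substantially more rigorous than what the paper actually provides: the paper's ``proof'' is only an informal note asserting that CNOTs introduce non-local correlations, that forward entanglement reproduces order-one memory and cross-entanglement order-two-or-higher memory, and that $P(k)$ therefore reflects compound interactions among the $\{\theta_{t-i}\}$ --- it never defines what ``encodes multi-lag temporal correlations'' means, nor verifies anything. Your two witnesses (non-factorizing amplitudes via the $\mathrm{GL}_n(\mathbb{F}_2)$ relabelling $\langle k'|\Psi_{\text{ent}}\rangle=\prod_i c_i^{(\,(M^{-1}k')_i)}$, and non-factorizing $\langle Z_iZ_j\rangle$ via the Clifford tableau $U_{\text{ent}}^\dagger Z_i U_{\text{ent}}=\prod_k Z_k^{T_{ik}}$ together with $\langle Z_k\rangle_{\Psi_t}=\cos\theta_{t-k}$) are exactly the right way to pin the statement down, and both computations check out. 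What your route buys is an operational, falsifiable version of the lemma plus an explicit identification of \emph{which} lags couple (the supports of the rows of $M^{-1}$ and of $T_i\oplus T_j$); what it costs is the genericity caveat you already flag: the conclusion fails at special angles (e.g.\ all $\theta_{t-i}=\pi/2$, where $\mathrm{CNOT}\ket{+}\ket{+}=\ket{+}\ket{+}$ and the state stays a product, or $\theta_{t-i}\in\{0,\pi\}$), so the lemma should really be stated for generic inputs --- a qualification the paper silently omits. The only point to tighten is the direction of the basis relabelling (whether $U_{\text{ent}}\ket{k}=\ket{Mk}$ forces $M^{-1}$ or $M^{\top}$ in the amplitude formula), but this does not affect the argument.
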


\begin{proof}[Note on Lemma.\ref{lemma:entanglement_memory}]
CNOT gates introduce non-local correlations between qubit amplitudes. Forward entanglement couples adjacent qubits, reproducing autoregressive memory of order one. Cross-entanglement propagates influence across non-adjacent qubits, introducing effective memory of order two or higher. Thus, the measurement distribution $P(k)$ reflects compound interactions among input angles $\{\theta_{t-i}\}$, encoding multi-lag temporal dependencies.
\end{proof}


While our model currently lacks closed-form learning rules, its physical transparency and architectural modularity make it a promising candidate for a near-term quantum machine learning model. The explicit mapping between classical time series concepts and quantum operations facilitates future extensions, including variational parameter tuning, hybrid quantum-classical feedback, and applications to inherently quantum dynamical systems.

\section*{Result}

\subsection*{Experimental settings}

To evaluate the performance of the proposed quantum time series (QTS) model, we conducted experiments on both synthetic and real-world data. For the synthetic dataset, we generated an autoregressive process of order one, AR(1), using the recursive equation $z_t = \phi z_{t-1} + \epsilon_t$, where $\phi = 0.8$ and $\epsilon_t \sim \mathcal{N}(0, \sigma^2)$ with $\sigma = 0.01$. A total of $2^{10} + 32 = 1056$ time steps were simulated, from which the first $1024$ were used for training and the remaining $32$ for forecasting.

For real-world validation, we utilized geopotential height data at the 500 hPa pressure level from the WeatherBench benchmark dataset~\cite{rasp2020weatherbench}. Specifically, we extracted the daily Z500 values at the equator (latitude = 0°) by averaging over all longitudes. The resulting univariate time series was normalized and partitioned analogously to the synthetic data: the first $1024$ points were used for training and the subsequent $8-32$ for testing. Further details about this dataset and preprocessing steps are provided in the supplementary material. The quantum circuit model was constructed using $n = 10$ qubits, corresponding to a logarithmic encoding of the total number of training points. Each qubit represented one past normalized data point encoded as a $R_y(\theta)$ rotation, with temporal correlations introduced through entangling CNOT gates.

For classical baselines, we implemented several standard forecasting models. The ARIMA model was configured with order $(2,1,0)$, chosen based on the Akaike Information Criterion (AIC) and prior benchmarking. For the Temporal Convolutional Network (TCN), we used an input chunk length of $64$, an output forecast horizon of $32$, and trained for $100$ epochs using the Darts time series library. Similarly, the Transformer model was configured with the same input and output lengths and also trained for $100$ epochs. Both neural models used default hyperparameter settings for hidden size and number of layers, ensuring fair comparison in computational budget and modelling capacity. Although a Long Short-Term Memory (LSTM) model was also tested, its performance on this short, univariate dataset was found to be significantly inferior (in the small dataset we considered) and hence is excluded from the comparative analysis. All models were evaluated on the same forecasting window, ensuring consistent benchmarking under matched data and compute regimes.

\subsection*{Gate-level implementation on IBM hardware with transpilation}

\begin{figure}[h!]
    \centering
    \begin{subfigure}{0.98\linewidth}
        \includegraphics[width=0.8\linewidth]{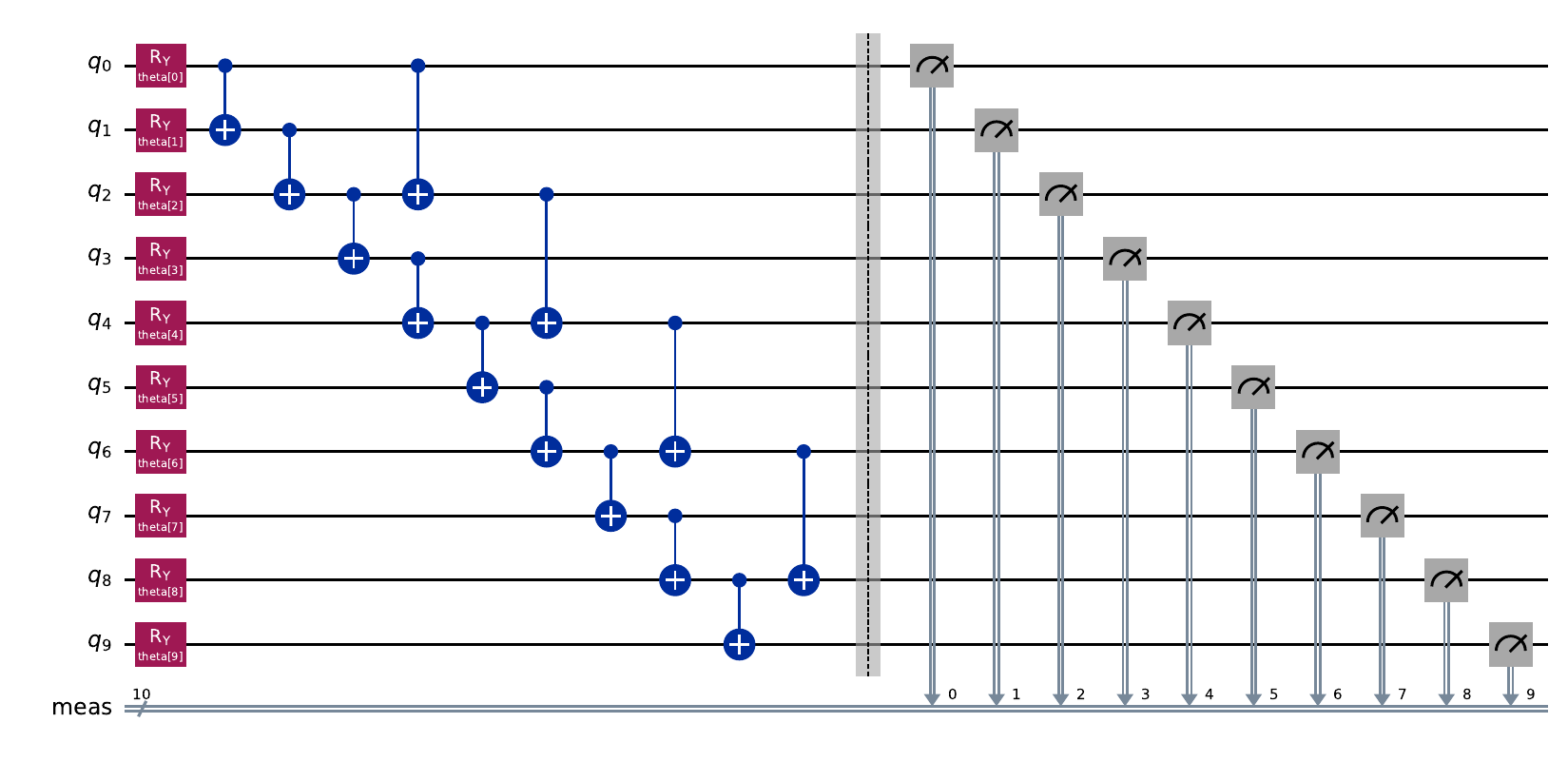}
        \caption{Original quantum circuit design for the QTS model using 10 qubits. Each qubit is initialized with a parameterized $R_y(\theta)$ gate representing normalized time series inputs, followed by a specific entanglement structure. Total gate counts without transpilation is $10$ single qubit gates, and $13$ CNOTs.}
        \label{fig:qts_circuit_raw}
    \end{subfigure}
    \hfill
    \begin{subfigure}{0.98\linewidth}
        \includegraphics[width=0.8\linewidth]{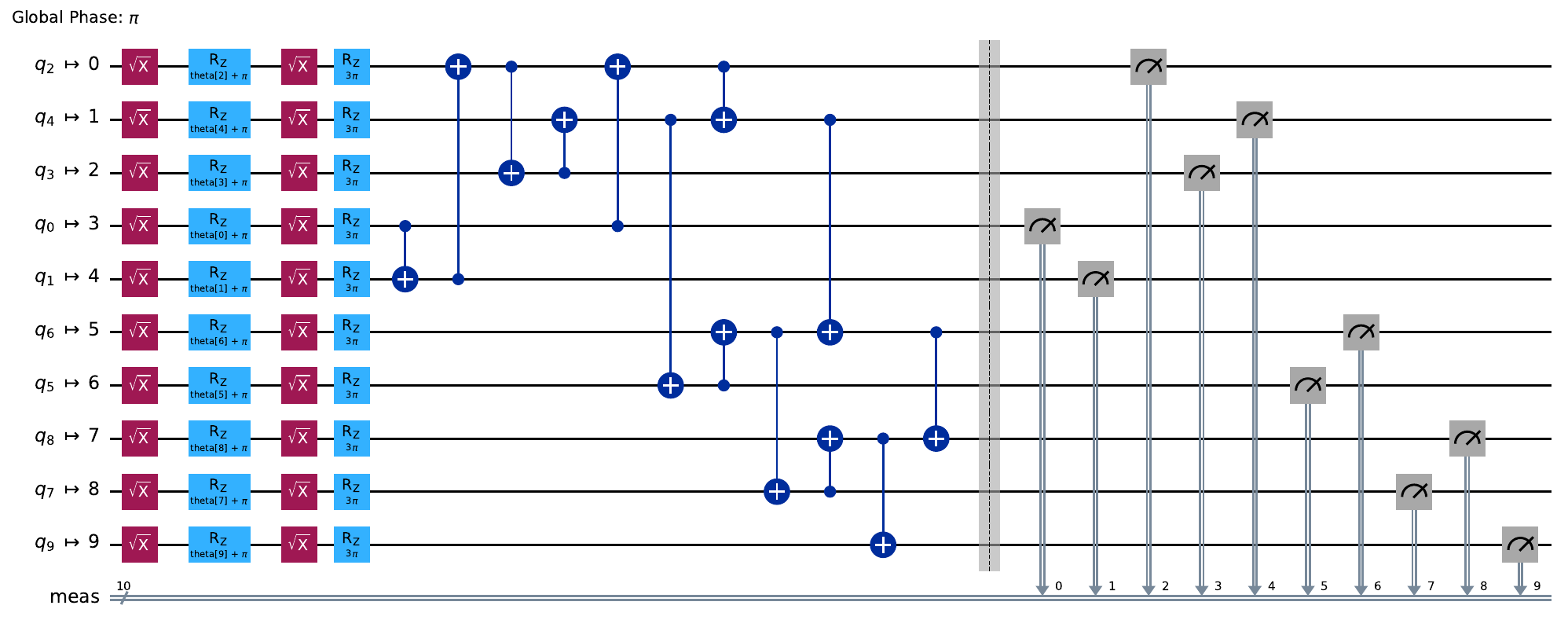}
        \caption{Transpiled version of the QTS circuit adapted for IBM hardware. The transpilation introduces native gates such as $\sqrt{X}$, $R_z$, and basis-aligned $CX$ operations to match the device connectivity and gate set constraints. Total gate counts after transpilation is $13$ CNOTs and $40$ single-qubit gates.}
        \label{fig:qts_circuit_transpiled}
    \end{subfigure}
    \caption{Comparison of quantum circuit design before and after transpilation for hardware compatibility. The transpilation adapts the circuit to physical qubit layout and native gate set, ensuring executable fidelity on superconducting quantum processors.}
    \label{fig:qts_circuit_comparison}
\end{figure}

Figure~\ref{fig:qts_circuit_comparison} illustrates the gate-level construction of our quantum time series (QTS) circuit. Figure~\ref{fig:qts_circuit_comparison}(a) presents the high-level design in Qiskit, where each of the $10$ qubits is initialized via a parameterized $R_y(\theta)$ gate encoding normalized time series values, followed by entangling layers capturing temporal dependencies. In contrast, Figure~\ref{fig:qts_circuit_comparison}(b) shows the transpiled circuit adapted for execution on IBM superconducting quantum hardware. This transformation maps the abstract design onto the native gate set of the device, including gates such as $\sqrt{X}$ and $R_z$, and respects hardware connectivity constraints. The comparison demonstrates how a logically compact design expands during transpilation to conform with realistic execution conditions while implementing QTS circuits on near-term quantum processors.

\begin{figure}[h!]
    \centering
    \begin{subfigure}{\linewidth}
        \includegraphics[width=\linewidth]{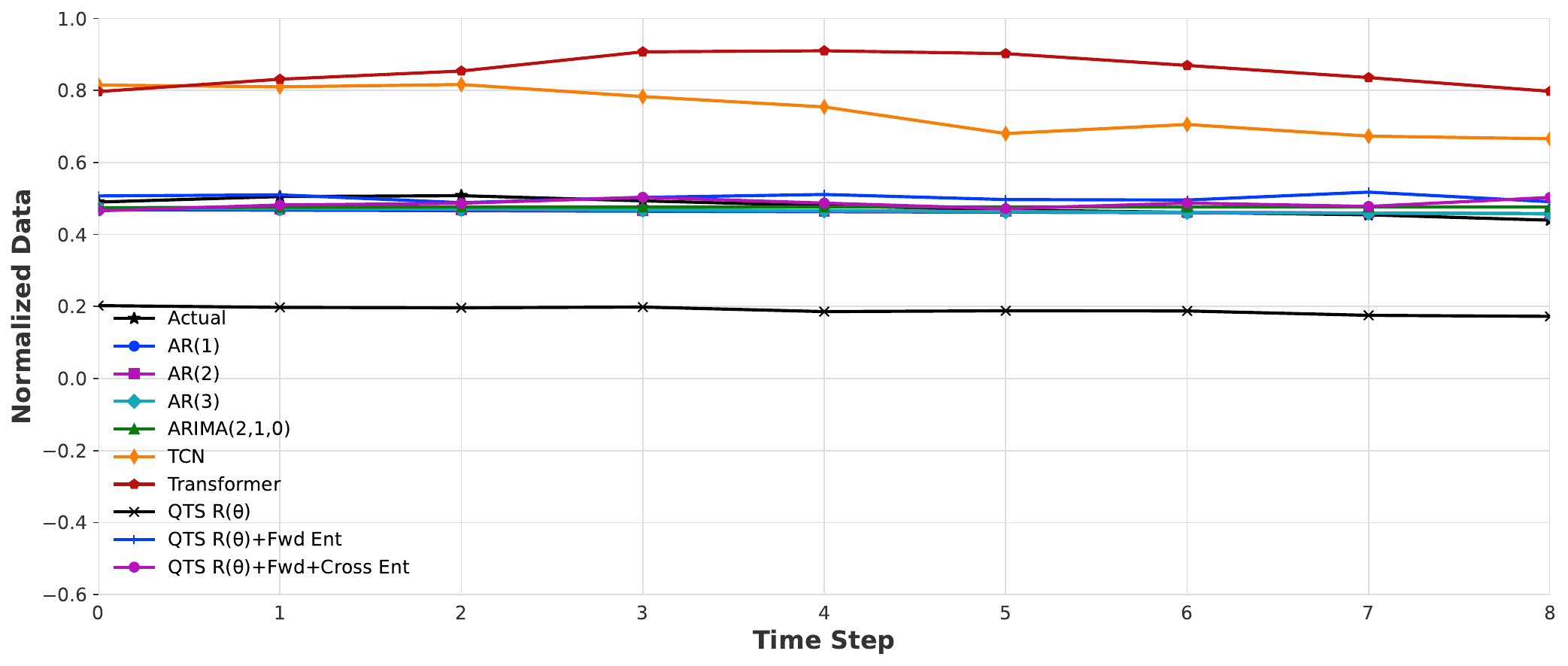}
       \caption{Forecasting performance of classical and quantum models evaluated on real Z500 atmospheric data using a noisy simulated quantum backend. Quantum Time Series (QTS) models with only $R(\theta)$ rotations yield an MSE of 0.122392, while introducing forward entanglement reduces the MSE significantly to 0.001650. Further inclusion of cross-entanglement achieves an MSE of 0.000920. In comparison, classical ARIMA(2,1,0) reports an MSE of 0.000422, TCN reaches 0.054303, and Transformer results in 0.121555. Classical AR(2) and AR(3) models achieve slightly better precision with MSE values of 0.000400 and 0.000410, respectively.}

        \label{fig:simulated_qts_results}
    \end{subfigure}
    \hfill
    \begin{subfigure}{\linewidth}
        \includegraphics[width=\linewidth]{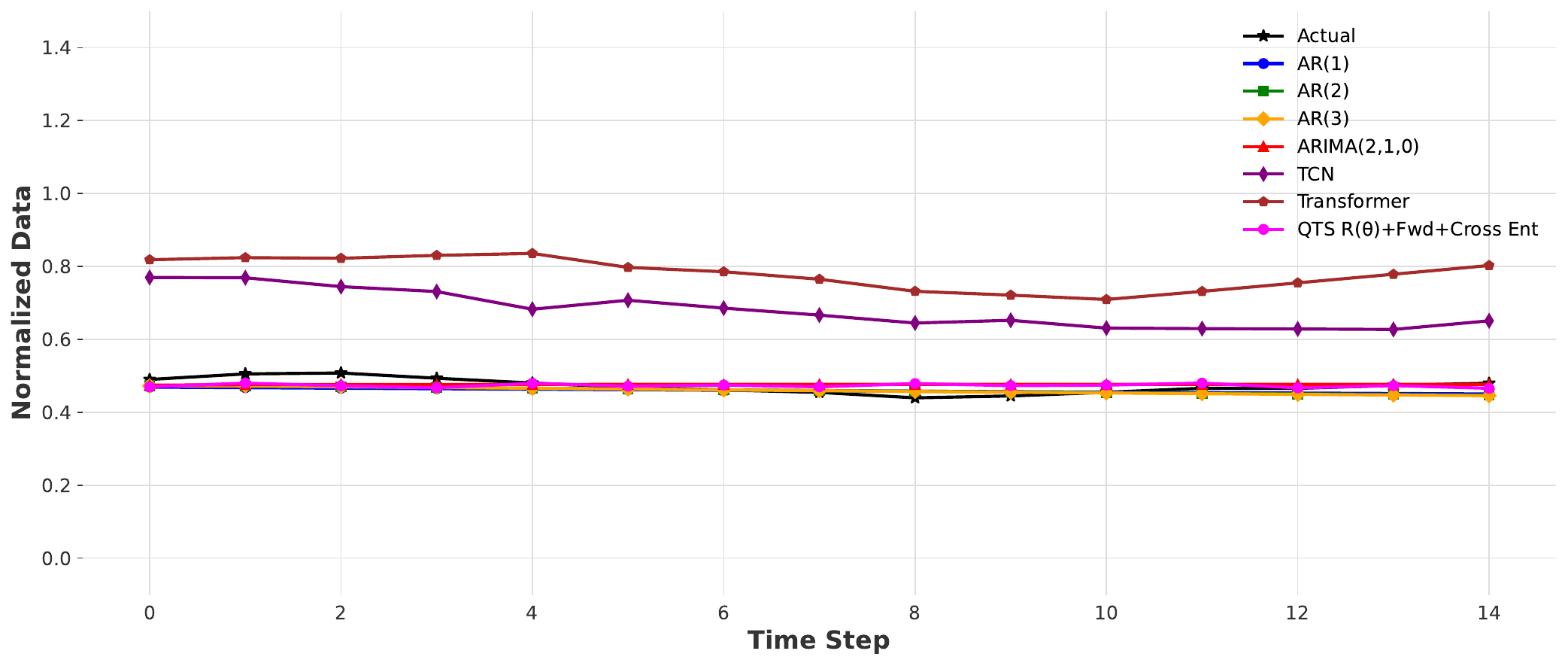}
    \caption{Forecasting performance comparison across classical and quantum models on real meteorological data, using 15 forecasting steps. All classical methods (AR, ARIMA, TCN, Transformer) were trained on 1024 normalized time points using classical CPU, while the QTS model—comprising $R(\theta)$ rotations with forward and cross-entanglement—was trained on 10 immediate past data points and executed 15 times on IBM’s Heron R2 quantum processor (\texttt{ibm\_kingston}). The QTS model, shown as \textit{QTS $R(\theta)$+Fwd+Cross Ent}, achieves comparable accuracy (MSE = 0.00043) with classical baselines.}
        \label{fig:hardware_qts_results}
    \end{subfigure}
\caption{Benchmarking of quantum and classical time series models on real weather data from \texttt{geopotential\_500hPa\_1979\_5.625deg.nc}, demonstrating the enhanced performance of QTS circuits when entanglement is incorporated.}
    \label{fig:qts_benchmark}
\end{figure}


To comprehensively assess the performance of our quantum time series (QTS) model, we benchmarked its predictive capabilities both on simulated noisy quantum backends and on real quantum hardware for the dataset given in WeatherBench with the file name: \texttt{geopotential\_500hPa\_1979\_5.625deg.nc}.

\textbf{Performance on synthetic dataset:}
To evaluate our quantum model under controlled conditions, we generated synthetic data using an autoregressive AR(1) process defined by \( z_t = \phi z_{t-1} + \epsilon_t \), with autoregressive coefficient \(\phi = 0.8\) and Gaussian noise \(\epsilon_t \sim \mathcal{N}(0, \sigma^2)\), where \(\sigma = 0.001\). A total of \(2^{10} + 16 = 1040\) data points were simulated, of which the first 1024 points were used for training and the remaining 16 for forecasting. After normalization, this dataset was used to benchmark the performance of various forecasting models. Results show that classical AR models achieve low MSE values (\(\sim 0.0157\)), while deep learning models such as TCN (MSE = 0.0321) and Transformer (MSE = 0.0251) perform moderately better. The QTS model with only \(R_y(\theta)\) rotations yields a higher error (MSE = 0.0768), but introducing forward and cross-entanglement significantly reduces the MSE to \(\sim 0.0373\), demonstrating that entanglement enhances the model’s capacity to learn temporal dependencies even in low-noise synthetic settings.

\textbf{On real dataset:} 
 
Figure~\ref{fig:qts_benchmark}(a) presents a comparative evaluation across classical forecasting methods—AR models, ARIMA, TCN, and Transformer—and QTS circuits implemented with increasing levels of entanglement. These experiments, performed on a noisy simulated backend, utilize normalized Z500 geopotential height data extracted from the WeatherBench archive, with the model tasked to forecast $16$ future time steps ($8$ points are shown in the figure). The QTS model employing only $R_y(\theta)$ rotations yields a relatively high mean squared error (MSE = 0.122392), whereas introducing forward entanglement substantially reduces the error to 0.001650. Further enhancement with cross-entanglement achieves an MSE of 0.000920. In contrast, classical ARIMA(2,1,0) attains an MSE of 0.000422, while deep learning models such as TCN and Transformer exhibit higher errors of 0.054303 and 0.121555, respectively. Notably, the QTS circuits accomplish this performance using significantly fewer parameters and training data points, illustrating the efficiency of quantum modelling with entanglement for time series forecasting.


To validate the practical viability of our quantum time series (QTS) model on real quantum hardware, we deployed the most expressive QTS circuit—employing $R(\theta)$ rotations, forward entanglement, and cross entanglement—on IBM’s Heron R2 superconducting quantum processor (\texttt{ibm\_kingston}, 156 qubits, median CZ error: $2.081 \times 10^{-3}$, processor type: Heron r2). The task involved forecasting 15 future values of normalized Z500 geopotential height using only 10 encoded qubits (data points) in alignment with the quantum circuit capacity. For benchmarking, all classical models—AR(1), AR(2), AR(3), ARIMA(2,1,0), TCN, and Transformer—were trained with 1024 classical time points. The results, shown in Figure~\ref{fig:qts_benchmark}(b), demonstrate that the QTS model achieves an MSE of 0.00043, outperforming the TCN (MSE = 0.0451) and Transformer (MSE = 0.0958), and performing comparably to classical ARIMA models. These results confirm that quantum models with entanglement can capture temporal patterns with significantly fewer training points, thus highlighting entanglement as a potential mechanism for temporal pattern learning in low-resource quantum environments.

\begin{table}[h!]
\centering
\caption{Mean Squared Error (MSE) for forecasting models across different execution platforms and forecast horizons.}
\label{tab:mse_combined}
\begin{tabular}{|l|c|c|}
\hline
\textbf{Model} & \textbf{MSE (CPU, Noisy Backend, 16 steps)} & \textbf{MSE (CPU, \texttt{ibm\_kingston}, 15 steps)} \\
\hline
AR(1) & 0.000434 & 0.000462 \\
AR(2) & 0.000400 &  0.000421 \\
AR(3) & 0.000410 & 0.000431 \\
ARIMA(2,1,0) & 0.000422 & 0.000416 \\
TCN & 0.054303 & 0.0450781 \\
Transformer & 0.121555 & 0.0957708 \\
QTS R($\theta$) & \textbf{0.122392} & - \\
QTS R($\theta$)+Fwd Ent & {\textbf{0.001650}} & - \\
QTS R($\theta$)+Fwd+Cross Ent &\textcolor{blue}{\textbf{ 0.000920}} & \textcolor{blue}{\textbf{0.00043191}} \\
\hline
\end{tabular}
\end{table}

\begin{figure}[h!]
    \centering
    \includegraphics[width=0.95\linewidth]{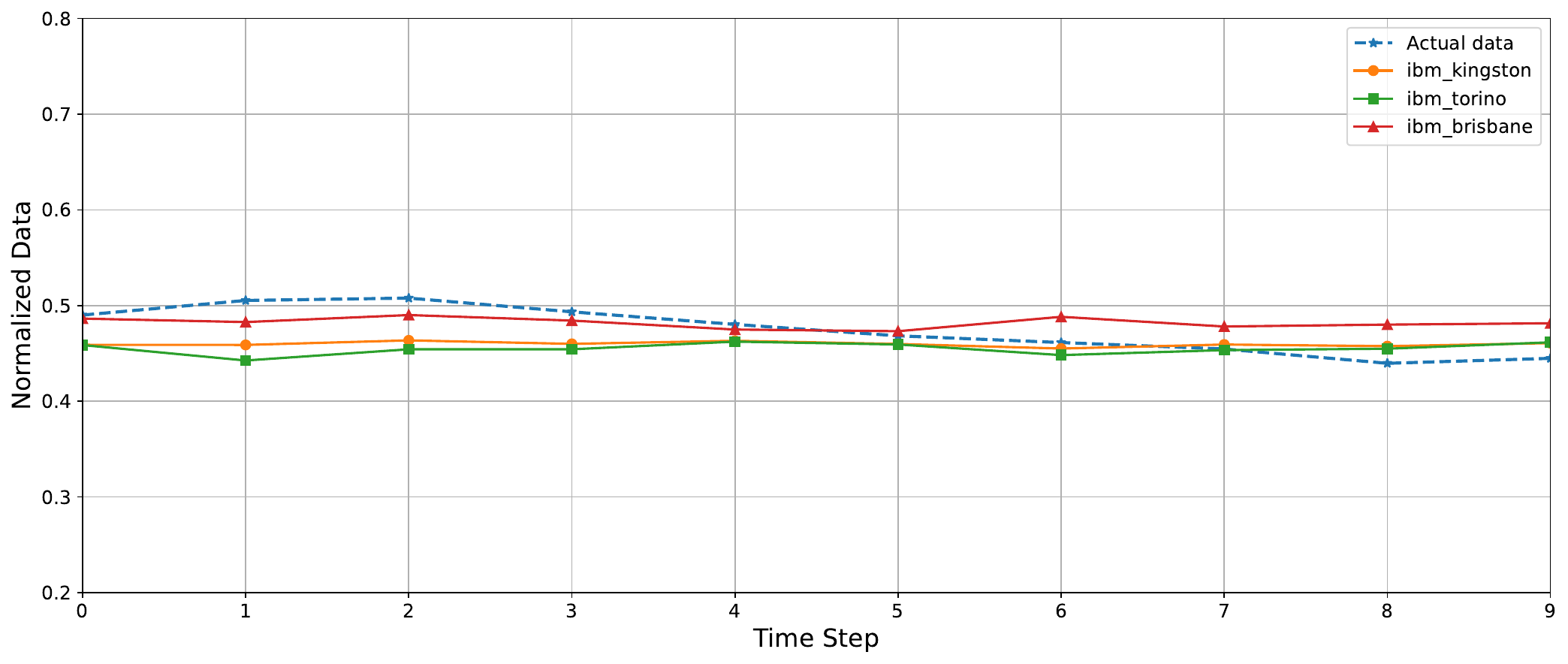}
    \caption{\textbf{Hardware Benchmarking}: Forecasting performance of the Quantum Time Series (QTS) model with forward and cross-entanglement executed on three IBM quantum hardware platforms: \texttt{ibm\_kingston}, \texttt{ibm\_torino}, and \texttt{ibm\_brisbane}. The circuits were evaluated on real Z500 meteorological data from the WeatherBench archive over a 10-step forecasting horizon. While all hardware platforms showed consistent predictive patterns, \texttt{ibm\_brisbane} achieved the lowest mean squared error (MSE = 0.00052), followed by \texttt{ibm\_kingston} (MSE = 0.00072) and \texttt{ibm\_torino} (MSE = 0.00104), highlighting the influence of hardware characteristics such as qubit fidelity and connectivity on forecasting accuracy.}
    \label{fig:qts_hardware_comparison}
\end{figure}

\textbf{Hardware Benchmarking}: To assess the practical viability of the proposed Quantum Time Series (QTS) model, we benchmarked its forecasting performance across three state-of-the-art IBM quantum processors: \texttt{ibm\_kingston} (156 qubits, median CZ error $2.08\times10^{-3}$, Heron r2), \texttt{ibm\_torino} (133 qubits, median CZ error $2.81\times10^{-3}$, Heron r1), and \texttt{ibm\_brisbane} (127 qubits, median ECR error $6.89\times10^{-3}$, Eagle r3). Using the QTS circuit variant incorporating both forward and cross-entanglement, each device was tasked with forecasting 10 future time steps of normalized Z500 geopotential height data derived from the WeatherBench archive.

Despite the presence of hardware noise and differing gate sets, all platforms were able to produce forecasts that closely aligned with the actual atmospheric signal. Among them, \texttt{ibm\_brisbane} achieved the lowest mean squared error (MSE = 0.00052), followed by \texttt{ibm\_kingston} (MSE = 0.00072) and \texttt{ibm\_torino} (MSE = 0.00104). These variations are consistent with backend-specific characteristics such as gate fidelity and coherence times, illustrating the importance of transpilation and error-aware circuit design. These results demonstrate that quantum circuits, even with current noise limitations and without access to quantum memory, can effectively learn temporal dependencies of real-world forecasting datasets within error-bound.

\subsection*{Computational complexity}

Let $n$ denote the number of qubits used to encode a time series of length $N$ and parameters $p$, and let $M$ represent the number of measurement shots per quantum circuit. The quantum circuit depth scales linearly with $n$, incorporating both single-qubit rotations and sparse entangling gates:
\begin{equation}
    \mathcal{C}_{\text{gates}} = \mathcal{O}(n).
\end{equation}
However, full-scale entanglement (across all qubits) can increase the gate complexity significantly. Here, we kept only $n$ parameters for the rotation gates and parameter-free forward and cross-entanglement gates CNOT in this work. Each circuit execution requires $\mathcal{O}(M)$ repetitions, and classical post-processing to convert measurement statistics into a prediction incurs an additional $\mathcal{O}(M \log M)$ cost. Overall, the total parameter and runtime complexity scales as $\mathcal{O}(\log N)$ in the training sequence length, offering significant compression relative to classical models such as AR, ARIMA, TCN, or Transformer, which typically scale as $\mathcal{O}(N)$ or higher. A detailed comparative analysis is provided in the supplementary material (Table~\ref{tab:complexity}).

\section*{Discussion and outlook}

The entanglement-based quantum time series (QTS) framework introduced in this study establishes a conceptually and operationally novel approach to forecasting temporal data with quantum circuits. By encoding scalar inputs via parameterized single-qubit rotations and leveraging both nearest-neighbour and cross-entanglement gates, the QTS model simulates autoregressive-like dependencies within a fixed-width quantum architecture. This formulation enables a compact and interpretable quantum representation of temporal dynamics without requiring quantum memory or non-unitary operations—two major obstacles in quantum time forecasting modelling. Through extensive simulations on synthetic autoregressive data and a real-world geophysical dataset, we benchmarked QTS models against a broad spectrum of classical baselines, including AR, ARIMA, Temporal Convolutional Networks (TCNs), and Transformer architectures. Despite the significantly lower data and parameter requirements—$\mathcal{O}(\log N)$, the QTS model with cross-entanglement achieved competitive predictive performance while maintaining lower training complexity and circuit depth. This advantage was most pronounced in regimes with limited data or noisy environments, underscoring the model’s potential in resource-constrained forecasting settings.

Here, we demonstrated hardware-executable versions of our circuits on real IBM quantum processors, validating the implementability of the proposed architectures under experimental noise (and without any specialized error mitigation/correction methods, discussed in supplementary material). The results corroborate the design intuition that forward and cross-entanglement effectively broaden the temporal receptive field, analogous to higher-order terms in classical AR models or long-range dependencies captured in deep neural networks. Our findings suggest several avenues for future exploration. First, more expressive quantum kernels could be integrated to model non-linear or non-stationary time series. Secondly, variational training of QTS circuits with adaptive feedback may allow for more general-purpose learning across heterogeneous datasets. Further, the QTS model offers a promising substrate for embedding physical priors—such as conservation laws or symmetries—directly into quantum architectures, potentially yielding hybrid models suitable for applications in quantum chemistry, condensed matter, and sensor networks. As quantum hardware advances in fidelity and scale, the QTS paradigm may offer a theoretical and practical foundation for quantum-native modelling of time-dependent processes across scientific disciplines.



\section*{Author contributions statement}
M.R.L. conceived the initial idea.  M.R.L. and R.G. jointly conducted the experiment(s), analyzed the results, and drafted the manuscript.  All authors reviewed the manuscript. 

\section*{Data availability}
The datasets supporting the current study are generated synthetically, and an example on open-source data is provided. This is discussed and cited in the main manuscript. 

\section*{Code availability}
The code generated during the current study is available from the corresponding author upon reasonable request.


\bibliography{reference}


\section*{Supplementary Information}

\subsection*{Open source dataset}

To assess the real-world applicability of our quantum time series (QTS) model, we utilize a geophysical dataset comprising geopotential height measurements at the 500 hPa pressure level, commonly referred to as Z500, for the year 1992. This dataset is drawn from the WeatherBench benchmark archive and is provided in NetCDF format with a global spatial resolution of 5.625° × 5.625°. Geopotential height at 500 hPa is a central diagnostic in atmospheric science, capturing the large-scale mid-tropospheric flow patterns. It serves as a proxy for identifying synoptic-scale structures such as troughs, ridges, and jet streams and is routinely used in operational weather forecasting and atmospheric dynamics research.

We extract a zonal-mean time series at the equator (latitude = 0°), averaged longitudinally to produce a univariate time series. This reduction retains the core temporal variability of the equatorial circulation while allowing the model to focus on temporal learning in a physically interpretable context. This scalar series provides a challenging yet informative testbed for benchmarking quantum and classical forecasting approaches. The use of this dataset is motivated by its relevance in both numerical weather prediction and recent machine learning-based forecasting studies. It offers a scientifically meaningful benchmark for testing our QTS framework under climatologically significant conditions. The dataset is publicly available through the WeatherBench archive at \url{https://dataserv.ub.tum.de/s/m1524895}, and is documented in the work by Rasp et al.~\cite{rasp2020weatherbench}.

\subsection*{Discussion on Error}

The performance of the quantum time series model is affected by both classical and quantum sources of noise. These sources can be broadly categorized into input-level noise in the data and hardware-induced imperfections arising from quantum circuit execution.

\subsubsection*{Classical noise in input data}

The synthetic dataset used in our simulations is derived from an AR(1) process with additive Gaussian white noise:
\begin{equation}
    x_t = \phi x_{t-1} + \varepsilon_t, \qquad \varepsilon_t \sim \mathcal{N}(0, \sigma^2).
\end{equation}
This noise simulates typical real-world uncertainties, including sensor inaccuracies and thermal fluctuations, and introduces baseline stochasticity that affects both classical and quantum prediction tasks.

\subsubsection*{Quantum hardware-induced errors}

In a practical implementation, quantum circuits are affected by several layers of hardware-level noise.
Quantum hardware introduces several sources of error that can influence the accuracy of circuit-based time series prediction. Gate errors arise from imperfections in the implementation of quantum operations, particularly in multi-qubit gates such as CNOTs, which introduce both coherent and incoherent noise during state preparation and evolution. Decoherence further limits circuit fidelity, as qubits lose quantum coherence over time through relaxation ($T_1$) and dephasing ($T_2$) processes, potentially degrading the circuit's ability to retain and manipulate temporal patterns before measurement. Additionally, measurement noise affects the reliability of classical readout, where bit-flip errors in the final state assignment can distort the inferred output distribution. These combined effects necessitate careful circuit design and error-aware execution strategies, especially when deploying quantum time series models on near-term hardware.

In our simulations, we use an ideal (noise-free) backend to evaluate the theoretical capability of the proposed model. However, real hardware implementation would need to incorporate these noise models.

\subsection*{Error mitigation strategies}

To suppress the influence of quantum noise on model performance, a range of mitigation strategies can be employed. Designing shallow circuits using hardware-efficient gate decompositions helps minimize exposure to decoherence and gate infidelity. Optimal qubit placement and transpilation tailored to the specific backend can further reduce error rates by avoiding the most error-prone hardware paths. Statistical uncertainty from sampling can be addressed by executing circuits over a large number of shots, thereby stabilizing the output distribution. In addition, post-processing methods such as readout error calibration, zero-noise extrapolation, and measurement unfolding can be applied to adjust for systematic noise in measurement outcomes. When integrated with the quantum time series framework, these techniques collectively enhance the robustness and predictive reliability of the model under realistic hardware constraints.

\subsection*{Complexity Analysis}\label{complexity_dis_details}

In this section, we rigorously analyze the computational complexity of the proposed Quantum Time Series (QTS) models and benchmark them against prominent classical time series forecasting techniques, including ARIMA, Temporal Convolutional Networks (TCNs), and Transformer-based models. The comparison is structured around three primary complexity dimensions: the number of data points used for training (data complexity), the number of tunable parameters (parameter complexity), and the total computational overhead, particularly in the context of quantum circuit depth and gate count (gate or model complexity). The QTS models utilize a quantum circuit encoding strategy where each input time series segment is mapped to a quantum state via a set of single-qubit rotations $R_y(\theta_i)$ applied to $n$ qubits. These qubits collectively encode a window of $n$ classical values, meaning the model learns patterns over $\mathcal{O}(n)$ input points. Since $n = \log_2 N$, where $N$ denotes the size of the classical dataset, this suggests a logarithmic dependence on the training data volume.

\begin{lemma}\label{traininig_lemma}
Let $n$ be the number of qubits and $N = 2^n$ the size of the classical dataset. Then, the QTS model requires only $\mathcal{O}(\log N)$ data points for training.
\end{lemma}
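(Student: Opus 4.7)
The plan is to reduce the claim to a direct counting argument built on the encoding scheme fixed earlier in the \emph{Quantum time series model} section. First, I would recall from equation~(1) that the input quantum state at time $t$ is $|\Psi_t\rangle = \bigotimes_{i=1}^{n} R_y(\theta_{t-i})|0\rangle$, so each of the $n$ qubits consumes exactly one past observation $x_{t-i}$ through its rotation angle $\theta_{t-i}$. By inspection of the circuit, no further classical samples enter the model: the entangling layer $U_{\text{ent}}$ is parameter-free (only CNOTs), and the readout is the post-measurement expectation of equation~(4). Hence the number of classical data points ingested per forecasting call is exactly $n$.

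Next, I would invoke the hypothesis $N = 2^n$ and take logarithms to obtain $n = \log_2 N$, so that the data footprint of the QTS model is $\log_2 N = \mathcal{O}(\log N)$, which is the desired bound. I would then note that this count is tight in the sense that removing any qubit strictly reduces the sliding window of lag dependencies that the circuit can represent, so the $\log N$ scaling cannot be further reduced within the current encoding.

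The only subtle step, and what I expect to be the main conceptual (rather than technical) obstacle, is clarifying the phrase \emph{data points used for training}. In a classical supervised pipeline one usually means the size of the empirical risk minimization corpus, whereas here the ``training'' data are precisely the historical lags loaded into the rotation angles, because the entangling backbone has no learnable weights. I would therefore add a short paragraph making this identification explicit and contrasting it with AR/ARIMA/TCN/Transformer baselines, which require access to the full length-$N$ history (or an $\mathcal{O}(N)$-sized batch) to fit their parameters. Once this interpretation is pinned down, the lemma follows immediately from the substitution above.
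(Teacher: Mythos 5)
Your proposal is correct and follows essentially the same counting argument as the paper: each circuit call consumes exactly $n$ lagged values via the $R_y(\theta_{t-i})$ rotations, and substituting $n=\log_2 N$ gives the bound. The only cosmetic difference is that the paper phrases the count through the sliding window over $F$ forecast steps (requiring $n+F$ values in total), whereas you count $n$ points per forecasting call and separately clarify what ``training data'' means, which is if anything a slightly cleaner presentation of the same idea.
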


\begin{proof}[Note on Lemma. \ref{traininig_lemma}]
Each quantum circuit iteration encodes $n$ classical values into $n$ qubits using a sliding window. Since this window has fixed size $n$ and slides one step per forecast point, the training process for $F$ forecast steps only requires $n + F$ values. As $n = \log_2 N$, the number of training points scales logarithmically with the dataset size $N$. Thus, the QTS model's training data requirement is $\mathcal{O}(\log N)$.
\end{proof}

\begin{lemma}\label{param_com}
The parameter complexity and gate complexity of the QTS model are both $\mathcal{O}(n)$.
\end{lemma}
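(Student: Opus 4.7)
\textbf{Plan for proving Lemma \ref{param_com}.} The plan is to argue by direct gate and parameter counting, using the explicit construction of the QTS circuit given earlier in the Temporal entanglement structure section. I would separate the argument into two parallel parts: first bounding the number of trainable parameters, then bounding the total gate count (which dominates the circuit complexity since entanglement gates are parameter-free).

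For the parameter count, I would recall that the input state is
\begin{equation}
|\Psi_t\rangle \;=\; \bigotimes_{i=1}^{n} R_y(\theta_{t-i})|0\rangle,
\end{equation}
so each of the $n$ qubits carries exactly one rotation angle $\theta_{t-i}$, and the entangling unitary $U_{\text{ent}}$ consists only of parameter-free CNOTs. Hence the total number of tunable parameters is exactly $n$, giving the $\mathcal{O}(n)$ parameter bound immediately.

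For the gate count, I would enumerate the three layers of the circuit separately. The rotation layer contributes $n$ single-qubit gates. The forward-entanglement layer contributes one CNOT per consecutive pair $(i,i+1)$ for $i=1,\dots,n-1$, i.e.\ $n-1$ gates. The cross-entanglement layer contributes one CNOT per pair $(i,i+2)$ for even $i\in\{2,4,\dots\}$, which is at most $\lfloor n/2\rfloor$ gates. Summing,
\begin{equation}
\mathcal{C}_{\text{gates}} \;\le\; n + (n-1) + \lfloor n/2 \rfloor \;=\; \mathcal{O}(n),
\end{equation}
which establishes the gate-complexity claim. I would then briefly note that, because $n=\log_2 N$, both bounds translate to $\mathcal{O}(\log N)$ in the dataset length, consistent with the complexity claim in the main text and with Lemma \ref{traininig_lemma}.

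The argument is essentially bookkeeping, so I do not expect a genuine technical obstacle; the only subtle point is to be explicit about what \emph{does not} count as a parameter. Specifically, I would emphasize that CNOTs carry no continuous parameter and that the cross-entanglement prescription ``$i\to i+2$ for even $i$'' yields only $\Theta(n)$ (not $\Theta(n^2)$) gates — a densification of the cross-entanglement pattern would break the linear bound, so the stated restriction is essential to the lemma.
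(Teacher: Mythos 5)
Your proposal is correct and follows essentially the same route as the paper's own note on this lemma: counting $n$ rotation parameters (with CNOTs contributing none) and summing $n + (n-1) + \lfloor n/2\rfloor$ gates to get the linear bound. Your added remark that the sparsity of the cross-entanglement pattern is what keeps the count at $\Theta(n)$ rather than $\Theta(n^2)$ is a useful clarification, but it does not change the argument.
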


\begin{proof}[Note on Lemma~\ref{param_com}]
The parameter vector $\vec{\theta}$ includes one parameter per qubit, resulting in $n$ rotation parameters for an $n$-qubit system and thus a parameter complexity of $\mathcal{O}(n)$. Regarding the gate complexity, the base QTS circuit uses $n$ single-qubit $R_y(\theta)$ rotation gates. When forward entanglement is included, it adds $n-1$ CNOT gates connecting adjacent qubits, and with the addition of cross entanglement, up to $\lfloor n/2 \rfloor$ more CNOT gates are introduced, linking qubits at distance two. These entangling gates introduce temporal correlations across the circuit without increasing the number of trainable parameters, which remain fixed at $\mathcal{O}(n)$. Consequently, the total gate count—including both single- and two-qubit gates—scales linearly with the number of qubits, giving an overall gate complexity of $\mathcal{O}(n)$. Since each circuit is executed $M_1$ times (shots) to generate the necessary measurement statistics for prediction, the resulting time complexity per forecast step is $\mathcal{O}(M_1 \cdot n)$.
\end{proof}

Table~\ref{tab:complexity} summarizes the asymptotic complexity comparisons of the models studied and detailed below.

\begin{table}[h!]
\centering
\caption{Complexity comparison across time series forecasting models.}
\label{tab:complexity}
\begin{tabular}{|l|c|c|c|}
\hline
\textbf{Model} & \textbf{Training Complexity} & \textbf{Parameter Complexity} & \textbf{Model/Gate Complexity} \\
\hline
AR($p$) & $\mathcal{O}(N \cdot p^2)$ & $\mathcal{O}(p)$ & $\mathcal{O}(N \cdot p^2)$ \\
ARIMA($p,d,q$) & $\mathcal{O}(N \cdot (p+q)^2)$ & $\mathcal{O}(p+q)$ & $\mathcal{O}(N \cdot (p+q)^2)$ \\
\hline
\hline
TCN Ref.\cite{bai2018empirical,vaswani2017attention} & $\mathcal{O}(E \cdot L \cdot k \cdot N \cdot d^2)$ & $\mathcal{O}(L \cdot d^2)$ & $\mathcal{O}(L \cdot k \cdot N \cdot d^2)$ \\
Transformer Ref.\cite{vaswani2017attention} & $\mathcal{O}(E \cdot L \cdot N^2 \cdot d)$ & $\mathcal{O}(L \cdot d^2)$ & $\mathcal{O}(L \cdot N^2 \cdot d)$ \\
\hline
\hline
QTS ($R(\theta)$) & $\mathcal{O}(\log N)$ & $\mathcal{O}(\log N)$ & $\mathcal{O}(M_1\log N)$ \\
QTS ($R(\theta)$ + Fwd) & $\mathcal{O}(\log N)$ & $\mathcal{O}(\log N)$ & $\mathcal{O}(M_1\log N)$ \\
QTS ($R(\theta)$ + Fwd + Cross) & $\mathcal{O}(\log N)$ & $\mathcal{O}(\log N)$ & $\mathcal{O}(M_1\log N)$ \\
\hline
\hline
\end{tabular}
\end{table}

\subsubsection*{Complexity Analysis of AR and ARIMA Models}

Autoregressive (AR) and autoregressive Integrated Moving Average (ARIMA) models are widely used classical methods for univariate time series forecasting. The AR model of order $p$ predicts the current value of a time series as a linear combination of its past $p$ values, while the ARIMA model incorporates differencing (to handle non-stationarity) and a moving average component for error correction Ref.\cite{box2015time}. Given a training sequence of length $N$, the AR($p$) model uses least squares estimation or Yule-Walker equations to estimate $p$ coefficients. This typically involves solving a system of linear equations, leading to a computational cost of $\mathcal{O}(N \cdot p^2)$ when implemented via ordinary least squares. In terms of parameter complexity, the model learns $p$ coefficients, yielding a space complexity of $\mathcal{O}(p)$.

ARIMA models extend this by incorporating differencing and potentially a moving average of order $q$. Estimating ARIMA($p,d,q$) involves optimization over $(p+q)$ parameters and differencing the sequence $d$ times, often requiring iterative maximum likelihood estimation (MLE). This increases the training complexity to roughly $\mathcal{O}(N \cdot (p+q)^2)$ Ref.\cite{hyndman2018forecasting}, depending on convergence criteria and the specific estimation algorithm. Unlike deep learning models, AR and ARIMA models do not require iterative training over epochs or a large number of parameters, but their reliance on the full sequence length $N$ for each prediction step and optimization phase leads to scalability limitations in large datasets. Furthermore, they assume linearity in time dependencies, limiting their modelling capacity for nonlinear or highly complex patterns. For AR models, the computational complexity depends on the order \( p \), representing the number of lag terms used for forecasting. The training complexity for AR is \( \mathcal{O}(N \cdot p^2) \), primarily due to solving the least-squares problem for estimating coefficients over \( N \) data points. The parameter complexity is \( \mathcal{O}(p) \), as there are \( p \) coefficients to learn. For ARIMA models, which extend AR by incorporating differencing (of order \( d \)) and moving average components (of order \( q \)), the training complexity becomes \( \mathcal{O}(N \cdot (p+q)^2) \). Nevertheless, the parameter complexity remains \( \mathcal{O}(p+q) \), and the model complexity scales similarly with the data length and the squared total number of parameters. These linear models serve as strong baselines due to their simplicity and interpretability for various time-series forecasting, especially without much non-linearity in the data pattern.

\subsubsection*{Complexity of Temporal Convolutional Networks (TCNs)}

Temporal Convolutional Networks (TCNs) apply causal and dilated convolutions to capture long-range dependencies in time series data. Based on the analysis in~Ref.\cite{vaswani2017attention,bai2018empirical}, a convolutional layer with kernel size $k$ and hidden dimension $d$ operating over $N$ time steps exhibits a per-layer computational complexity of $\mathcal{O}(k \cdot N \cdot d^2)$. When employing $L$ such layers and training the network over $E$ epochs, the overall training complexity scales as $\mathcal{O}(E \cdot L \cdot k \cdot N \cdot d^2)$. The model's parameter complexity is dominated by the convolutional filters and is given by $\mathcal{O}(L \cdot d^2)$, assuming the same hidden dimension across layers.

Additionally, due to the use of dilated convolutions, TCNs offer a maximum path length between input and output of $\mathcal{O}(\log_k N)$, which is favourable compared to recurrent models. However, this benefit comes at the cost of a higher per-layer arithmetic workload, particularly when $d$ and $k$ are large. As such, while TCNs offer improved parallelism and temporal receptive fields, their complexity in both data and model size remains linear in $N$.

\subsubsection*{Complexity of Transformer Models}

Transformer architectures~Ref.\cite{vaswani2017attention} model temporal dependencies via self-attention mechanisms, enabling every position in the input sequence of length $N$ to directly attend to every other. According to the complexity analysis provided in Table~1 of~Ref.\cite{vaswani2017attention}, the computational cost per layer for full self-attention is $\mathcal{O}(N^2 \cdot d)$, where $d$ denotes the representation dimension of the model. Since Transformers support parallel processing across positions, they require only $\mathcal{O}(1)$ sequential operations and achieve a maximum path length of $\mathcal{O}(1)$, enabling efficient learning of long-range dependencies. For a model with $L$ attention layers trained over $E$ epochs, the overall training complexity scales as $\mathcal{O}(E \cdot L \cdot N^2 \cdot d)$. The parameter complexity is approximately $\mathcal{O}(L \cdot d^2)$, accounting for the attention weight matrices and feedforward transformations. Despite their strong expressivity and scalability in parallel hardware, Transformers become computationally expensive for large $N$ due to their quadratic time and memory requirements in sequence length.

\subsubsection*{Complexity of Entanglement-based Quantum Time Series Model}

Table~\ref{tab:complexity} highlights the efficiency of Quantum Time Series (QTS) models, which achieve logarithmic scaling in both data and parameter complexity relative to the total number of time steps \( N \). This stems from encoding \( N \) classical time steps into \( n = \log N \) qubits via amplitude encoding. As a result, training and parameter complexity remain at \( \mathcal{O}(\log N) \), considerably lower than classical models that typically require \( \mathcal{O}(N) \) or more. Model complexity, including gate depth and measurement repetitions \( M_1 \), scales as \( \mathcal{O}(M_1 \log N) \), accounting for single-qubit rotations and fixed-depth entangling layers using forward and cross \( \mathrm{CX} \) gates. These gates enhance expressivity without increasing parameter count. In contrast, classical models such as ARIMA, TCNs, and Transformers incur greater computational costs, often linear or quadratic in \( N \). Thus, QTS offers a promising approach for forecasting in low-resource settings through compact encoding and shallow, fixed-parameter circuits. While we do not claim performance superiority over classical methods, our results show that temporal patterns can be learned using minimal data, entanglement, and parameterized gates. Further optimization—including improved learning strategies, error mitigation, and deeper circuit architectures—may enhance quantum advantage in future implementations.

\subsection*{Entanglement-informed variance propagation}

We now formalize the role of entanglement in modulating predictive uncertainty through quantum correlations:

\begin{lemma}\label{lemma2}
Let $\{x_{t-i}\}_{i=1}^n$ be a set of past observations, each sampled as $x_{t-i} \sim \mathcal{N}(\mu, \sigma^2)$ and normalized to $[0,1]$, then encoded into qubit states via $R_y(\theta_{t-i})$ with $\theta_{t-i} = 2\sin^{-1}(x_{t-i})$. Let $|\Psi_t\rangle = \bigotimes_i R_y(\theta_{t-i})|0\rangle$ be the initial product state and $|\Psi_{\text{ent}}\rangle = U_{\text{ent}}|\Psi_t\rangle$ be the final state after applying entanglement. Then the variance of any observable $\mathcal{O}$ measured on $|\Psi_{\text{ent}}\rangle$ depends on the structure of $U_{\text{ent}}$ and introduces statistical correlations up to the entanglement range.
\end{lemma}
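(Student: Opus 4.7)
My plan is to work in the Heisenberg picture so that the effect of $U_{\text{ent}}$ is pushed onto the observable, and then expand in the Pauli basis to expose the dependence on the entanglement range. Writing $\tilde{\mathcal{O}} := U_{\text{ent}}^{\dagger} \mathcal{O}\, U_{\text{ent}}$, I have $\langle \mathcal{O}\rangle_{\text{ent}} = \langle \Psi_t | \tilde{\mathcal{O}} | \Psi_t\rangle$ and $\text{Var}(\mathcal{O}) = \langle \Psi_t | \tilde{\mathcal{O}}^2 | \Psi_t\rangle - \langle \Psi_t | \tilde{\mathcal{O}} | \Psi_t\rangle^2$, reducing the whole problem to evaluating moments of $\tilde{\mathcal{O}}$ on the product state $|\Psi_t\rangle$.

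The next step is to decompose $\mathcal{O} = \sum_{P} c_P\, P$ in the $n$-qubit Pauli basis. Since $U_{\text{ent}}$ is a Clifford circuit built from CNOTs, each Pauli $P$ maps to another Pauli $\tilde{P} = U_{\text{ent}}^{\dagger} P\, U_{\text{ent}}$ whose support is determined by the CNOT light cone: forward entanglement between qubits $i$ and $i{+}1$ can extend support to a neighbour (e.g.\ $Z_{i+1} \mapsto Z_i Z_{i+1}$, $X_i \mapsto X_i X_{i+1}$), and the cross-CNOTs at distance two further extend the support, so that after the full $U_{\text{ent}}$ each $\tilde{P}$ is supported on a set of qubits no larger than the entanglement range $r$ induced by the circuit. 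This gives a clean bookkeeping of which input qubits can contribute to any term of $\tilde{\mathcal{O}}$.

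Now I evaluate expectations on $|\Psi_t\rangle$. Because $|\Psi_t\rangle$ is a product state, for any Pauli string $\tilde{P} = \bigotimes_j \sigma_j$ we have $\langle \Psi_t | \tilde{P} | \Psi_t\rangle = \prod_{j \in \mathrm{supp}(\tilde{P})} \langle 0 | R_y(\theta_{t-j})^{\dagger} \sigma_j R_y(\theta_{t-j}) | 0\rangle$, which evaluates to a product of trigonometric functions of the encoded angles (e.g.\ $\cos\theta_{t-j}$ for $\sigma_j = Z$, $\sin\theta_{t-j}$ for $\sigma_j = Y$, zero for $\sigma_j = X$). Substituting these into $\text{Var}(\mathcal{O}) = \sum_{P,Q} c_P c_Q\, \langle \tilde{P}\tilde{Q}\rangle - \sum_{P,Q} c_P c_Q \langle\tilde{P}\rangle \langle\tilde{Q}\rangle$ yields cross terms that couple the angles $\{\theta_{t-i}\}$ of any two qubits lying within the joint support of $\tilde{P}\tilde{Q}$; in the unentangled case ($U_{\text{ent}} = I$) these cross terms vanish for disjointly-supported Paulis and the variance factorizes. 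Finally I take the expectation over the random inputs $x_{t-i} \sim \mathcal{N}(\mu,\sigma^2)$: because the trigonometric functions above are nonlinear in $\theta_{t-i} = 2\sin^{-1}(x_{t-i})$, the outer average produces effective covariance structure among the encoded data precisely at lag distances $\le r$, which is the statement that correlations are introduced up to the entanglement range.

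The main obstacle I expect is making the phrase \emph{up to the entanglement range} quantitatively sharp. Concretely, I will need to track the CNOT light cone rigorously (so that no Pauli string $\tilde{P}$ has support beyond $r$ qubits) and to argue that for observables whose Heisenberg image has support entirely within a single qubit's lightcone neighbourhood the induced covariances vanish outside that neighbourhood. A secondary technical point is handling the $\sin^{-1}$ in the encoding when taking the outer Gaussian expectation; I would either linearize around the mean (valid for small $\sigma$) or leave the statement in terms of the exact moments $\mathbb{E}[\cos\theta], \mathbb{E}[\sin\theta]$, which suffice to establish the qualitative claim without requiring closed-form evaluation.
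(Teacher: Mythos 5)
Your proposal is correct in outline but is a genuinely different, and considerably more rigorous, route than what the paper actually provides. The paper's ``proof'' is only an informal note: it asserts that without entanglement the measurement variances factorize across qubits, that CNOTs induce correlated interference terms, and that these correlations propagate the input variance nonlinearly into $P(k)$ and hence into the statistics of $\hat{x}_{t+1}$. No Heisenberg-picture conjugation, no Pauli decomposition, and no light-cone bookkeeping appear anywhere in the paper. Your plan --- conjugating $\mathcal{O}$ by the Clifford circuit $U_{\text{ent}}$, tracking the support growth of each Pauli string under forward and cross CNOTs, factorizing expectations over the product state $|\Psi_t\rangle$, and only then averaging over the Gaussian inputs --- is exactly the kind of argument that would make the lemma's vague phrase ``up to the entanglement range'' precise, and the obstacle you flag (defining that range via the CNOT light cone) is indeed the real content that the paper leaves unaddressed. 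What your approach buys is a quantitative statement and a clean separation between the quantum correlations (fixed by the circuit) and the classical randomness of the inputs; what the paper's note buys is only brevity.

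One technical slip to fix: for $R_y(\theta) = \exp(-i\theta Y/2)$ the rotated state $R_y(\theta)|0\rangle = \cos(\theta/2)|0\rangle + \sin(\theta/2)|1\rangle$ has real amplitudes, so the single-qubit expectations are $\langle Z\rangle = \cos\theta$, $\langle X\rangle = \sin\theta$, and $\langle Y\rangle = 0$. You have assigned $\sin\theta$ to $Y$ and zero to $X$, which swaps the last two. This does not break the structure of your argument, but it would propagate into any explicit evaluation of the cross terms, so it should be corrected before you carry out the computation.
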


\begin{proof}[Note on Lemma. \ref{lemma2}]
In the absence of entanglement, each qubit contributes independently to the measurement distribution, resulting in factorized variances across outcomes. The inclusion of entangling gates, such as nearest-neighbour or cross CNOTs, entangles qubit amplitudes and induces correlated interference terms in the final state. These correlations propagate input variance nonlinearly through the circuit, modulating the final measurement probabilities $P(k)$ and hence altering the predictive statistics of $\hat{x}_{t+1}$.
\end{proof}

\end{document}